\def\BibTeX{{\rm B\kern-.05em{\sc i\kern-.025em b}\kern-.08em
    T\kern-.1667em\lower.7ex\hbox{E}\kern-.125emX}}
\newtheorem{remark}{Remark}
\newtheorem{definition}{Definition}
\newtheorem{problem}{Problem}
\newtheorem{proposition}{Proposition}
\newtheorem{eg}{Example}
\newtheorem{lemma}{Lemma}
\newtheorem{theorem}{Theorem}
\begin{document}
\title{Event Concealment and Concealability Enforcement~in Discrete Event Systems Under~Partial~Observation}
\author{Wei Duan, Christoforos N. Hadjicostis,~\IEEEmembership{Fellow, IEEE}, and Zhiwu Li,~\IEEEmembership{Fellow, IEEE}	
	\thanks{This work was supported in part by the National R\&D Program of China under Grant No. 2018YFB1700104, the National Natural Science Foundation of China under Grants 61603285 and 61873342, and the Science Technology Development Fund, MSAR, under Grant 0012/2019/A1. (Corresponding author: Christoforos N. Hadjicostis.)} 
	\thanks{Wei Duan is with the School of Electro-Mechanical Engineering, Xidian University, Xi'an 710071, China (e-mail: dwei1024@126.com).}
	\thanks{Christoforos N. Hadjicostis is with the Department of Electrical and Computer Engineering, University of Cyprus, Nicosia 1678, Cyprus (e-mail: hadjicostis.christoforos@ucy.ac.cy).}
	\thanks{Zhiwu Li is with the School of Electro-Mechanical Engineering, Xidian University, Xi’an 710071, China, and also with the Institute of Systems Engineering, Macau University of Science and Technology, Taipa 999078, Macau SAR, China (e-mail: zhwli@xidian.edu.cn).}}

\maketitle

\begin{abstract}
	In this paper, we investigate event concealment and concealability enforcement in discrete event systems modeled as non-deterministic finite automata under partial observation. 
	Given a subset of secret events in a given system,
	{\em concealability} holds if the occurrences of all secret events remain hidden to a curious observer (an eavesdropper). 
	When concealability of a system does not hold, we analyze how a defensive function, placed at the interface of the system with the eavesdropper, can be used to enforce concealability by manipulating the observations generated by the system using event deletions, insertions, or replacements. 
	The defensive function is said to be $\emph{C}$-enforcing if, following the occurrence of secret events and regardless of earlier and subsequent activity generated by the system, it can always deploy a strategy to manipulate observations and conceal the secret events perpetually.
	We propose a polynomial complexity construction for obtaining one necessary and one sufficient condition for $\emph{C}$-enforceability.
\end{abstract}

\begin{IEEEkeywords}
	Discrete event system, Finite automaton, Partial observation, Event concealment, Concealability, Concealability enforcement.
\end{IEEEkeywords}

\section{Introduction}
\label{sec:introduction}
\IEEEPARstart{S}{ecurity} and privacy are vital to emerging cyber-physical systems due to the increasing reliance of many applications on shared cyber infrastructures \cite{CPS2}. 
In the framework of discrete event systems (DESs) \cite{2010Lafortunebook}, \cite{2020Chrisbook}, where the underlying system is usually modeled as a finite automaton or a Petri net \cite{2017Ma}, \cite{2018Zhu},
problems of privacy protection have been widely investigated in terms of an {\em information flow} property called opacity \cite{2007Chris}, \cite{2013Wu}, \cite{2011Lin}.

Opacity characterizes the ability of a system to hide a secret from an external observer, i.e., a curious eavesdropper (or simply eavesdropper in the sequel) who is assumed to have full knowledge of the system's structure and partial observation capability.	
Opacity enforcement becomes necessary when a system is not opaque, and has been extensively studied mainly following two approaches: supervisory control strategies that restrict the system's behavior in order to avoid opacity violations \cite{2012Chrisa}, \cite{2010Dubreil}, \cite{2015YinB}, or enforcement via output obfuscation \cite{2014Wu}, \cite{2016Wu}, \cite{2018Wu}.

In most opacity research, the given secret is generally denoted as a subset of executions \cite{2011Lin} or a subset
of states \cite{2007Chris}, \cite{2013Wu}, \cite{2008Chris}.
In this paper, we focus on the case when certain events of a system are deemed secret and study (i) the circumstances under which their occurrences get revealed to an external eavesdropper, and (ii) ways to conceal (as long as needed) the secret events using an obfuscation mechanism.

To answer these questions, the privacy of a system is considered in terms of concealing secret events and the concept of event concealment is proposed.
The main contributions are as follows: (i) We systematically provide a formal description of secret events and the notion of concealability, make connections to existing notions of opacity, and briefly discuss a diagnoser-based approach that is developed to verify concealability of the system.
(ii) We consider the problem of concealability enforcement when the system is unconcealable, using a so-called defensive function placed at the output of the system to obfuscate the eavesdropper observations by appropriately modifying the original observations generated by the system (via event deletions, insertions, or replacements).
Taking advantage of the special structure of the concealability problem, we propose a verifier-like structure of polynomial complexity to obtain one necessary condition and one sufficient condition for enforceability of the defensive function with polynomial complexity.

\section{Preliminaries}

\subsection{System Model}

Let $E$ be a finite set of events and $E^*$ be the set of all finite-length strings (finite sequences of events) over $E$ including the empty string~$\varepsilon$. For a string $\lambda\in E^*$, we denote its length by $|\lambda|$ (the length of the empty string is defined as $|\varepsilon|=0$). A language $L\subseteq E^*$ is a subset of finite-length strings.
A prefix of a string $\lambda\in E^*$ is a string $u\in E^*$ for which there exists a string $v\in E^*$ such that $uv=\lambda$, where $uv$ denotes the concatenation of strings $u$ and $v$.
The prefix closure of $\lambda$ is the set of all of its prefixes denoted by $\overline{\lambda}=\lbrace u\in E^*|\exists v\in E^*, uv=\lambda \rbrace$.

We consider the event concealment problem in a DES modeled as a non-deterministic finite automaton (NFA) $G =(X, E, f, X_0)$, where $X$ is the finite set of states, $E$ is the finite set of events, $f : X\times E \rightarrow 2^X$ is the state transition function, and $X_0\subseteq X$ is the set of possible initial states. 
The transition function $f$ can be extended to $X\times E^*\rightarrow 2^X$ in the usual manner.
The generated language of $G$, denoted by $L(G)$, is defined as $L(G) =\lbrace \lambda\in E^*\mid\exists x\in X_0, f(x,\lambda)\ne\emptyset\rbrace$.
Without loss of generality, we assume that the set $X_0$ is a singleton (i.e., $X_0=\lbrace x_0\rbrace$).
Due to partial observation of the system, the event set $E$ can be partitioned into a set of observable events $E_o$ and a set of unobservable events $E_{uo}=E- E_o$.
The natural projection $P: E^*\rightarrow E_o^*$ strips out the unobservable events from a given sequence of events, and
is defined recursively for $\lambda\in E^*$ as $P(\lambda e) =P(\lambda)$ if $e\in E_{uo}$ and $P(\lambda e)=P(\lambda) e$ if $e\in E_o$ (we set $P(\varepsilon)=\varepsilon$ for $\varepsilon$).
The inverse projection of $P$ is defined as $P^{-1}: E_o^*\rightarrow 2^{E^*}$ with $P^{-1}(\omega)=\lbrace \lambda\in E^*|P(\lambda)=\omega\rbrace$ for $\omega\in E_o^*$.

\subsection{Diagnosability and Opacity}
Consider a system modeled as an NFA $G =(X, E, f, x_0)$ with $E=E_o\dot{\cup} E_{uo}$.
We briefly recall the notions of diagnosablity \cite{1995Sampath} and current-state opacity \cite{2007Chris}.
Let $E_f\subseteq E_{uo}$ denote the set of fault events, and $S\subseteq X$ denote the set of secret states. 
For simplicity, we consider a single type of faults, i.e., $E_f=\lbrace f\rbrace$.

\begin{definition}\label{diagnosability}\cite{1995Sampath}
	Given a system $G$ and its generated language $L(G)$, $G$ is diagnosable with respect to (w.r.t.) the natural projection $P$ and the set of fault events $E_f$ if for all $f\in E_f$, and all $\lambda=\lambda_1f\in L(G)$, where $\lambda_1\in E^*$, there exists a finite number $n\in \mathbb{N}$ such that for all $\lambda_2\in E^*$ such that $\lambda\lambda_2\in L(G)$ and $|\lambda_2|\geq n$,  we have that for all $\lambda'\in P^{-1}(P(\lambda\lambda_2))\cap L(G)$, $f\in\lambda'$ holds (where $f\in \lambda'$ means that event $f$ appears in $\lambda'$).	
\end{definition}


\begin{definition}\label{opacity}\cite{2007Chris}
	Given a system $G$ and its generated language $L(G)$, $G$ is current-state opaque w.r.t. the natural projection $P$ and a set of secret states $S\subseteq X$ if for all $\lambda\in L(G)$ such that $f(x_0,\lambda)\cap S\neq\emptyset$, there exists $\lambda'\in L(G)$ such that $P(\lambda)=P(\lambda')$ and $f(x_0,\lambda')\cap(X-S)\neq\emptyset$.	
\end{definition}


\section{Event concealment Formulation}

In this work, we are interested in a scenario where a subset of events is deemed as the secret of a given system.
More specifically, we consider a system modeled as an NFA $G = (X, E, f, x_0)$ with event set $E=E_o\dot{\cup} E_{uo}$,  where $E_o$ and $E_{uo}$ are the sets of observable events and unobservable events, respectively.
We assume, without loss of generality, that the set of secret events $E_S$ is a subset of $E_{uo}$ (i.e., $E_S\subseteq E_{uo}$) since observable secret events will be trivially revealed. 
For the sake of simplicity, we consider a single type of secrets, i.e., $E_S=\lbrace s\rbrace$.
We make the following assumptions (similar to \cite{1995Sampath}) for the NFA $G$:\\
(1) The language of $G$ is live, which means that there is at least one event defined at each state $x\in X$.\\
(2) There exists no cycle of unobservable events in $G$.

\begin{definition}\label{defConcealable}
	Given a system $G$ and its generated language $L(G)$, the occurrence of secret event $s\in E_S$ is said to be concealable in $L(G)$ w.r.t. the natural projection $P$ if for each $\lambda=\lambda_1s\lambda_2\in L(G)$, where $\lambda_1,\lambda_2\in E^*$, there exists $\lambda'\in P^{-1}(P(\lambda))\cap L(G)$ such that $s\notin \lambda'$ holds.
	The system $G$ is concealable w.r.t. $P$ and $E_S$ if for all $s\in E_S$, $s$ is concealable.
\end{definition}

Suppose that a secret event $s\in E_S$ occurs after a sequence $\lambda_1$. Event $s$ is said to be concealable if after any indefinitely extendible continuation $\lambda_2\in E^*$, there exists a sequence $\lambda'$ that generates the same observations as $\lambda=\lambda_1s\lambda_2$ but contains no secret event $s$; this implies that the observer cannot be certain that the secret event $s$ has occurred.

\begin{definition}\label{defunConcealable}
	Given a system $G$ and its generated language $L(G)$, the occurrence of secret event $s\in E_S$ is said to be unconcealable in $L(G)$ w.r.t. the natural projection $P$ if there exists $\lambda_s=\lambda_1s\lambda_2\in L(G)$, where $\lambda_1,\lambda_2\in E^*$, and for all $\lambda'\in P^{-1}(P(\lambda_s))\cap L(G)$, $s\in \lambda'$ holds. 	Such a sequence $\lambda_s$ is called an $s$-revealing sequence.
	The system $G$ is unconcealable w.r.t. $P$ and $E_S$ if there exists at least one unconcealable secret event $s\in E_S$.
\end{definition}


\subsection{Comparison between Diagnosability and Concealability}

From Definition \ref{diagnosability}, a fault is said to be diagnosable if it can be detected within a finite number of observable events after its occurrence. In order to detect the fault accurately, one should ensure that the fault can be detected for {\em any} (long enough) execution after its occurrence. 
When considering a secret event, once it gets revealed under {\em some} execution after its occurrence, we say that the privacy of the system has been compromised (since there is a possibility for the secrecy of the event to be compromised).
Thus, there is clearly an inverse relationship between event diagnosis and event concealment. 
However, the two notions are not exactly the opposite of each other as clarified by the following proposition.
For the sake of comparison, we assume that the set of secret events is exactly the same as the set of fault events, i.e., the set of particular events is $E_p=E_S=E_f$ (again, we focus on a single type of secret/fault events).

\begin{proposition}\label{pro1}
	Given the system $G=(X,E,f,x_0)$ and its generated language $L(G)$, $G$ is not diagnosable w.r.t. the natural projection $P$, and a set of particular events $E_{p}=E_S=E_f\subseteq E_{uo}$ if $G$ is concealable w.r.t. $P$ and $E_p$.
\end{proposition}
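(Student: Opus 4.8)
The plan is to argue by contradiction, exploiting the complementary quantifier structure of the two definitions. Suppose, contrary to the claim, that $G$ is simultaneously concealable and diagnosable with respect to $P$ and $E_p=E_S=E_f=\{s\}$. Since the comparison is only meaningful when the secret/fault event can actually occur, I take as given that $s$ appears in some string of $L(G)$, so that there is a sequence of the form $\lambda=\lambda_1 s\in L(G)$ with $\lambda_1\in E^*$ (if no such sequence exists, both notions hold vacuously and the statement degenerates).

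Next I would invoke diagnosability at this particular $\lambda=\lambda_1 s$. By Definition 1, there exists a finite $n\in\mathbb{N}$ such that for every continuation $\lambda_2\in L(G)\backslash\lambda$ with $|\lambda_2|\geq n$, every $\lambda'\in P^{-1}(P(\lambda\lambda_2))\cap L(G)$ satisfies $s\in\lambda'$. I then use the liveness assumption to produce at least one such continuation: because at least one event is defined at every reachable state, the string $\lambda_1 s$ can always be extended to some $\lambda_1 s\lambda_2\in L(G)$ with $|\lambda_2|\geq n$.

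The crucial step is to confront this with concealability. The string $\lambda\lambda_2=\lambda_1 s\lambda_2$ is exactly of the form required by Definition 2 (with prefix $\lambda_1$ and suffix $\lambda_2$), so concealability guarantees the existence of some $\lambda'\in P^{-1}(P(\lambda\lambda_2))\cap L(G)$ with $s\notin\lambda'$. This directly contradicts the conclusion drawn from diagnosability, namely that every such $\lambda'$ must contain $s$. Hence $G$ cannot be diagnosable, which is the desired statement.

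The argument is essentially a quantifier-matching exercise, and I expect the only delicate point to be the handling of the quantifier on $n$: diagnosability promises a single finite horizon $n$ after which detection is certain, whereas concealability supplies a masking sequence for \emph{every} continuation, including those longer than $n$. Making sure the witness continuation $\lambda_2$ is both admissible (i.e., $\lambda_1 s\lambda_2\in L(G)$, which is where liveness is needed) and long enough ($|\lambda_2|\geq n$) is the one place where care is required; the fact, noted in the text, that the two notions are not exact opposites corresponds precisely to the (harmless here) vacuous edge case in which $s$ never occurs.
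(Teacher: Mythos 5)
Your proof is correct and follows essentially the same route as the paper, which omits the formal argument and merely sketches the contrapositive: concealability supplies, for every $\lambda_1 s\lambda_2$, a projection-equivalent string avoiding $s$, which is exactly the negation of the diagnosability condition. Your contradiction-style write-up is in fact slightly more careful than the paper's sketch, since you make explicit the role of liveness in producing a continuation of length at least $n$ and flag the vacuous edge case where $s$ never occurs.
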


The proof of Proposition \ref{pro1} is omitted since it follows directly from Definitions \ref{diagnosability} and \ref{defConcealable}.

\subsection{Comparison between Opacity and Concealability}

From Definition \ref{opacity}, current-state opacity requires that the intruder can never be certain, based on its observations, that the current state of the system is secret.
In this regard, one can convert concealability to current-state opacity in a manner similar to the reduction of fault diagnosis to a state isolation problem \cite{2020Chrisbook}.

Specifically, given a system $G=(X,E,f,x_0)$ and a set of secret events $E_S$, a modified automaton can be constructed as $G_S=(X\dot{\cup} X_S,E,f_S,x_0)$, where $X_S$ is a copy of the set of original states with label $S$ (i.e., $X_S=\lbrace1_S,2_S,\dots\rbrace$ for $X=\lbrace 1,2,\dots\rbrace$), and the transition function $f_S$ is defined as
\begin{enumerate}
	\item $f_S(x,e)=f(x,e)$ for $x\in X$, $e\in E-E_S$
	\item $f_S(x,e)=Q_S'$ if $Q'=f(x,e)$, for $x\in X$, $e\in E_S$  (here $Q'$ is a subset of states of the original system $G$, i.e., $Q'\subseteq X$, and $Q_S'$ is a copy of $Q'$ whose states are associated with label $S$, i.e., $Q_S'=\lbrace x_S'\in X_S|x'\in Q'\rbrace\subseteq X_S$). 
	\item $f_S(x_S,e)=Q_S'$ if $Q'=f(x,e)$, for $x_S\in X_S$, $e\in E$.
\end{enumerate}


Note that $G_S$ and $G$ are non-deterministic.
According to the construction of $G_S$, following a sequence of observations, one can isolate the current state of the system to be within the state set $X_S$ if and only if at least one secret event has occurred.
In other words, concealability can be converted to a current-state opacity problem.

\begin{proposition}\label{pro2}
	Given a system $G=(X,E,f,x_0)$ and a modified automaton $G_S=(X\dot{\cup} X_S,E,f_S,x_0)$, $G_S$ is current-state opaque w.r.t. the natural projection $P$ and the set of secret states $X_S$, if and only if $G$ is concealable w.r.t. $P$ and the set of secret events $E_S$.
\end{proposition}

The proof of Proposition \ref{pro2} is omitted since it follows directly from Definitions \ref{opacity} and \ref{defConcealable}, and Lemma 7.1 in \cite{2020Chrisbook}.

\subsection{Problem Formulation}
In the remainder of this paper, we consider a system modeled as an NFA $G = (X, E, f, x_0)$ with a singleton set of secret events $E_S=\lbrace s\rbrace\subseteq E_{uo}$, where $E=E_o\dot{\cup} E_{uo}$ (with $E_o$ being the set of observable events and $E_{uo}$ being the set of unobservable events).
We are interested in hiding from an external observer (a curious eavesdropper) confidential information of the system that is represented as the occurrences of events from $E_S$, which is called the secret event set.
Accordingly, the privacy of the system is the concealment of the occurrences of the secret events.
Although secret events are unobservable, their occurrences may be revealed after a finite number of subsequent observations.
The problem of event concealment is stated as follows.

\begin{problem}
	Given an NFA $G=(X,E,f,x_0)$ with $E=E_o\dot{\cup} E_{uo}$, determine whether or not $G$ is concealable  w.r.t. the natural projection $P$ and the set of secret events $E_{S}\subseteq E_{uo}$.
\end{problem}

\section{Verification of Concealability}
This section presents a necessary and sufficient condition for concealability based on a diagnoser construction, where the diagnoser was originally proposed for diagnosability verification \cite{1995Sampath}.
The idea is that a secret event is concealable if and only if for each arbitrarily long sequence in which the secret event occurs at least once, we can find at least one other arbitrarily long sequence in which the secret event does not appear and the two sequences have the same observable projection.
To proceed, several definitions are provided below.

We use label $l\in\lbrace N,S\rbrace$ to capture the diagnostic information at each state of the system, where label $S$ is associated with a state that can be reached via a sequence of events that contains the secret event $s\in E_S$, whereas label $N$ is associated with a state that can be reached via a sequence of events that does not contain the secret event.
The diagnoser of the given system $G=(X,E,f,x_0)$ is derived from \cite{1995Sampath} as $G_d = (X_{d}, E_o, f_d, x_{d_0})$, where $X_d\subseteq 2^{X\times \lbrace N, S\rbrace}$ is the set of states, $x_{d_0}=\lbrace (x,S)|\exists u=u'su''\in E_{uo}^*,s\in E_S, x\in f(x_0,u)\rbrace\dot{\cup}\lbrace (x,N)|\exists v\in(E_{uo}-E_S)^*,x\in f(x_0,v)\rbrace$
is the initial state, and $f_d$ is the transition function constructed in a manner similar to the transition function of an observer \cite{1990ozeveren}, which includes attaching secret labels $l$ to the states and propagating these labels from state to state. Note that the resulting state space $X_d$ is a subset of $2^{X\times \lbrace N, S\rbrace}$ composed of the states of the diagnoser that are reachable from $x_{d_0}$ under~$f_d$. 

A state $x_d$ in $X_d$ is of the form $x_d=\lbrace (x_1,l_1),(x_2,l_2),...,(x_n,l_n)\rbrace$, where $x_i\in X$ and $l_i\in \lbrace N,\allowbreak S\rbrace$ for all $i\in\lbrace 1,2,\dots,n\rbrace$.
A state $x_d$ is called normal if $l_i = N$ for all $(x_i,l_i)\in x_d$; $x_d$ is called secret if $l_i = S$ for all $(x_i,l_i)\in x_d$; and $x_d$ is called uncertain if there exist $(x_i,l_i),(x_j,l_j)\in x_d$ such that $l_i = N$ and $l_j = S$, where $i, j\in\lbrace 1,2,\dots,n\rbrace$.
The state space $X_d$ can be accordingly partitioned into three disjoint sets $X_d=X_{N_d}\dot{\cup} X_{S_d}\dot{\cup} X_{NS_d}$, where $X_{N_d}$ is the set of normal states, $X_{S_d}$ is the set of secret states, and $X_{NS_d}$ is the set of uncertain states.

The revealing of a secret is captured by a state labeled with $S$ in the diagnoser.
Since revealing is irreversible, subsequent states in the diagnoser (that can be reached from this state) will also be labeled with $S$.  
Thus, to verify concealability, we argue that a given system is concealable as long as no secret state is found in its diagnoser.
%

\begin{lemma}\label{lemmaSstate}
	Given a system $G=(X,E,f,x_0)$ and its diagnoser $G_d=(X_d,E_o,f_d,x_{d_0})$, there exists a secret state in $G_d$ if and only if there exists an $s$-revealing sequence $\lambda_s=\lambda_1s\lambda_2\in L(G)$, where $s\in E_S$.
\end{lemma}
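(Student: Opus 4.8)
The plan is to exploit two structural facts about the diagnoser $G_d$: that any $s$-revealing sequence drives $G_d$ into a secret state, and that the set $X_{S_d}$ of secret states is absorbing under the transition function $f_d$. Combined with the standing assumptions that $L(G)$ is live and contains no cycle of unobservable events, a pigeonhole argument on the finite set $X_{S_d}$ will then produce the desired secret cycle $\mathcal{C}_{S_d}$.

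First I would identify the diagnoser state reached after the observation induced by the $s$-revealing sequence, namely $x_d := f_d(x_{d_0}, P(\lambda_s))$, and show that $x_d \in X_{S_d}$. By the construction of $G_d$, the state $x_d$ collects exactly those pairs $(x,l)$ for which there is some $\mu \in P^{-1}(P(\lambda_s)) \cap L(G)$ with $x \in f(x_0,\mu)$ and $l = S$ precisely when $s \in \mu$. Since $\lambda_s$ is $s$-revealing, every such $\mu$ satisfies $s \in \mu$, so every component of $x_d$ carries label $S$; hence $x_d$ is a secret state. Next I would establish the absorbing property: whenever $x_d \in X_{S_d}$ and $f_d(x_d,e)$ is defined for $e \in E_o$, we have $f_d(x_d,e) \in X_{S_d}$. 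This follows from label propagation, since each component of $f_d(x_d,e)$ is reached by extending a path that already contained $s$ (and thus bore label $S$), so the extended path still contains $s$ and retains label $S$. Consequently, once $G_d$ enters $X_{S_d}$ it can never leave.

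With these two facts in hand, I would invoke liveness and acyclicity of the unobservable events. Because $L(G)$ is live, every string extends, so there is an infinite run of $G$ prolonging $\lambda_s$; because $G$ admits no cycle of unobservable events, any such infinite run must generate infinitely many observable events. Recalling that $P(L(G)) = L(G_d)$, the resulting infinite observable string drives the deterministic $G_d$ through an infinite trajectory of states, all of which lie in $X_{S_d}$ from the visit to $x_d$ onward by the absorbing property. Since $X_{S_d}$ is finite, some secret state must recur, and the segment of the trajectory between two consecutive occurrences of this state is, by Definition~4, a cycle whose every state is secret, i.e., a secret cycle $\mathcal{C}_{S_d}$.

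The main obstacle I anticipate is guaranteeing the existence of an \emph{unbounded} observable continuation from $x_d$ that remains entirely within $X_{S_d}$. The $s$-revealing hypothesis only constrains the single observation $P(\lambda_s)$; it is the liveness and no-unobservable-cycle assumptions that let me promote a lone secret state into an infinite secret-labeled observable trajectory, which is what renders the pigeonhole argument applicable. Care is also required to confirm that the absorbing property is applied to \emph{every} continuation consistent with $G_d$, not merely to the particular infinite run I single out, and to note that $f_d$ being deterministic makes this induced trajectory well defined.
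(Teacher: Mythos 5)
Your proposal is correct and follows essentially the same route as the paper's (much terser) proof: the $s$-revealing observation drives $G_d$ into a secret state, the $S$-label is absorbing, and liveness together with the no-unobservable-cycle assumption forces an infinite observable continuation that, by pigeonhole on the finite set $X_{S_d}$, must close a secret cycle. Your write-up is in fact more careful than the paper's, which leaves the absorbing property and the role of the no-unobservable-cycle assumption implicit.
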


\textit{Proof:}
Necessity: The existence of a secret state in $G_d$ implies that the occurrence of the secret event is revealed following a sequence of observations, which leads to that state. Furthermore, by the diagnoser construction, if there exists a secret state in $G_d$, there necessarily exists in the system an $s$-revealing sequence that generates a sequence of observations such that the secret state is reached.	

Sufficiency: If there exists an $s$-revealing sequence $\lambda_s$, the occurrence of the secret event will be revealed.
Once the occurrence of the secret event is detected, the current diagnostic state will necessarily be labeled as $S$.
Thus, there exists at least one secret state in $G_d$. 
\hfill
$\diamond$

%
%

\begin{proposition}
	Given a system $G=(X,E,f,x_0)$ and its diagnoser $G_d=(X_d,E_o,f_d,x_{d_0})$, $G$ is concealable w.r.t. $P$ and $E_S\subseteq E_{uo}$ if and only if there exists no secret state in~$G_d$.
\end{proposition}

\textit{Proof:}
Necessity:
		We first prove that if $G$ is concealable, then there exists no secret state in $G_d$. 
		By following Definition~\ref{defConcealable}, $G$ is concealable if for all $s\in E_S$, and all $\lambda=\lambda_1s\lambda_2\in L(G)$, there exists $\lambda'\in P^{-1}(P(\lambda))\cap L(G)$ such that $s\notin \lambda'$ holds. This means that there exists no $s$-revealing sequence, which implies the absence of a secret state in $G_d$ due to Lemma \ref{lemmaSstate}.

	Sufficiency:
       We now prove that if there exists no secret state in $G_d$, then $G$ is concealable.
       By contrapositive, assume that $G$ is unconcealable.
       Then, there exists a corresponding $s$-revealing sequence $\lambda_s=\lambda_1s\lambda_2$ with $s\in E_S$ according to Definition \ref{defunConcealable}.
       In conclusion, there exists a secret state in $G_d$ by following Lemma~\ref{lemmaSstate}.
\hfill
$\diamond$

\begin{figure}[htbp]
	\centering  
	\subfigure[System $G$.]{
		\centering
		\includegraphics[width=0.28\textwidth]{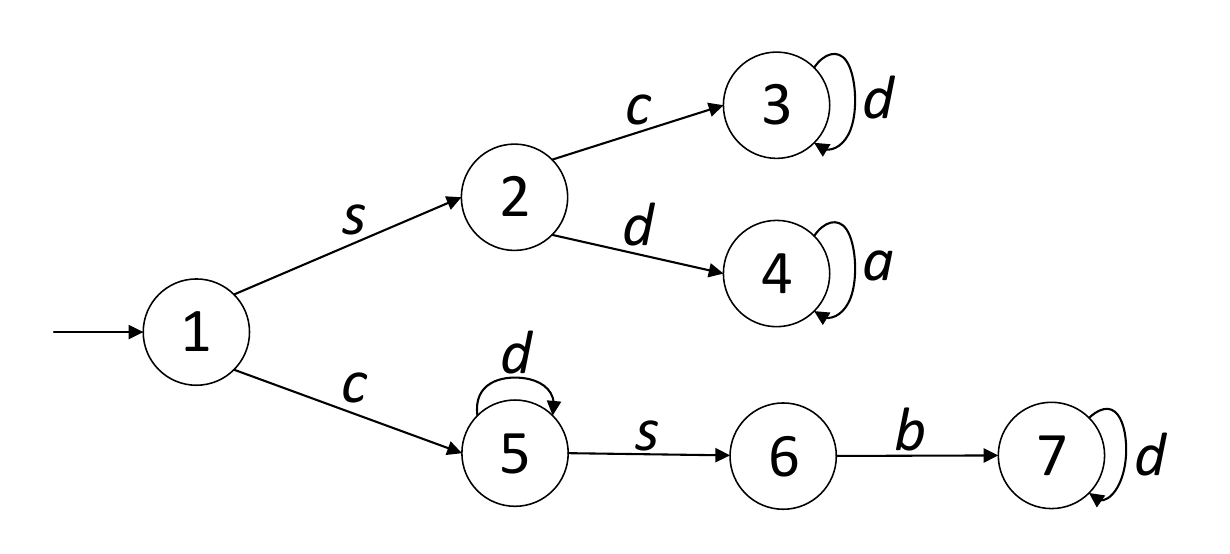}
		\label{G}
	}
	\subfigure[Diagnoser $G_d$.]{
		\centering
		\includegraphics[width=0.15\textwidth]{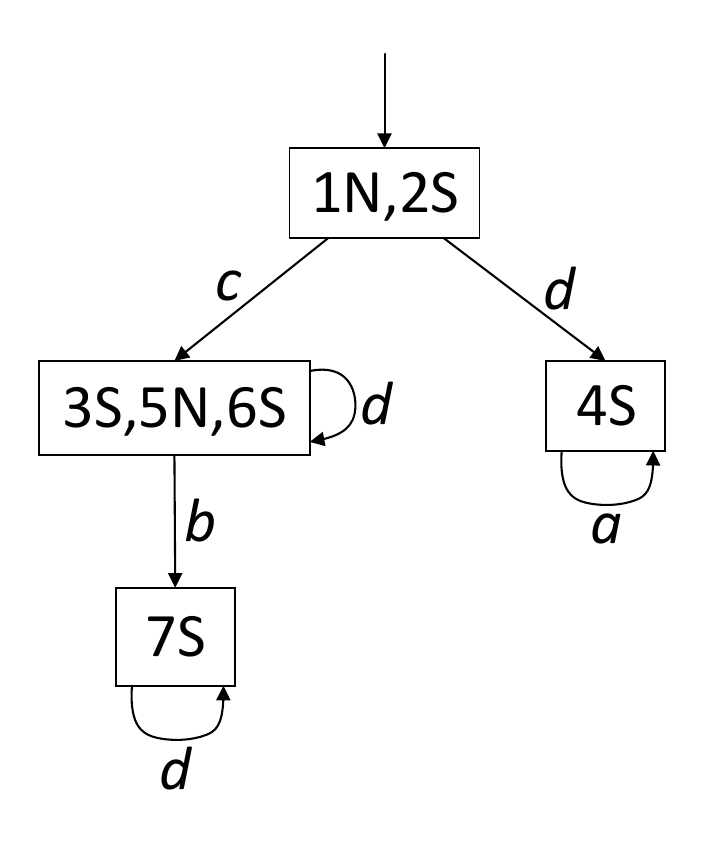}
		\label{Gd}
	}
	\caption{Verification of concealability.}
	\label{Verificationofconcealability}
\end{figure}

\begin{eg}
	Consider the system $G=(X,E,f,x_0)$ as shown in Fig. \ref{G}, where $X=\lbrace1,2,3,4,5,6,7\rbrace$, $x_0=\lbrace1\rbrace$, and $E=E_o\dot{\cup} E_{uo}$ with $E_o=\lbrace a,b,c,d\rbrace$ and $E_{uo}=\lbrace s\rbrace$. 
	Assume that the set of secret events $E_S$ equals $E_{uo}$, i.e., $E_S=E_{uo}=\lbrace s\rbrace$.
	The diagnoser $G_d$ (shown in Fig. \ref{Gd}) is driven by the observable events in $G$.
	One can notice that there are two secret states, namely, $\lbrace 4S\rbrace$ and $\lbrace 7S\rbrace$ in $G_d$.
	Hence, system $G$ is unconcealable since an eavesdropper is able to detect the occurrence of the secret event if states $\lbrace 4S\rbrace$ and $\lbrace 7S\rbrace$ are reached.
	\hfill$\diamond$
\end{eg}

\begin{remark}(A verifier cannot be used to verify concealability) 
	A verifier as presented in \cite{2001Kumar}, \cite{2002Lafortune} is a construction of polynomial complexity that can be used to verify diagnosability.
	However, a verifier cannot be used to verify concealability because it tracks the behavior of system $G$ in a pairwise manner and does not allow us to be conclusive about concealability.
	To verify concealability, one needs to check whether the occurrences of secret events have been detected in \emph{all} diagnostic states of a diagnoser construction.
	Intuitively, the presence of a secret state in a diagnoser leads to the conclusion that the occurrences of secret events have been detected; that is, the system $G$ is unconcealable.
	However, the presence of a secret state in a verifier does not necessarily imply the violation of concealability.
	Thus, a verifier cannot be used to verify the concealability of a system.
	Nevertheless, as we argue later in the paper, a verifier-based approach can still be used to obtain necessary and/or sufficient conditions for $C$-enforceability.
\end{remark}

\section{Concealability Enforcement Formulation}

To deal with the enforcement of concealability, we assume that the considered system is unconcealable in the remainder of this paper.
In this section, we first introduce a defensive function to manipulate actual observations generated by the system in order to enforce concealability.
Then, the notion of $C$-enforceability is introduced, which characterizes the ability of a defensive function to manipulate the observations of output events such that the occurrences of secret events can be concealed from the eavesdropper regardless of system activity.

\subsection{Defensive Function}
A defensive function is an interface of a system to obfuscate the observations seen at the eavesdropper by appropriately modifying the original observable events generated by the system (via deletions, insertions, or replacements).
Note that we assume that the eavesdropper is not aware of the existence of the defensive function in this paper.
We denote the set of replaced events by $E_R\subseteq E_o\dot{\cup} \lbrace\varepsilon\rbrace$, whereas the set of events that can be inserted is denoted as $E_I\subseteq E_o$.
A defensive function, denoted by $D:E_o^*\rightarrow 2^{E_o'^*}$, is defined recursively as $D(\lambda t)=D(\lambda)D(t)$ for $\lambda\in E_o^*$ and $t\in E_o$, with $D(t)\in E_o'$, where $E_o'=\lbrace t/o\mid o\in E_R\dot{\cup} \lbrace et\mid e\in E_I\rbrace\rbrace$.
Notation ``$/$'' is used to represent defensive operations for each defensive action.
That is, the defensive function can capture the set of possible sequences of defensive actions for a given observation sequence generated by the system.
A defensive projection\footnote{For example, consider a set of observable events $E_o=\lbrace a,b\rbrace$ and assume that the set of replaced events is $E_R=\lbrace a,\varepsilon\rbrace$ and the set of inserted events is $E_I=\lbrace b\rbrace$. The defensive action for event $a$ satisfies $D(a)\in\lbrace a/a, a/\varepsilon,ba\rbrace$. The eavesdropper may receive any observation from $P_D(D(a))=\lbrace a,\varepsilon,ba\rbrace$ after event $a$ occurs, depending on the choice we make for $D(a)$.} $P_D: E_o'^*\rightarrow E_o^*$ strips the notation $/$ out from $D$ and is defined recursively as $(i)$ $P_D(\sigma)=o$ where $\sigma=t/o$ and $P_D(\sigma)=et$ where $\sigma=t/et$ for $t\in E_o$, $o\in E_R$, $e\in E_I$, and $(ii)$ $P_D(\omega\sigma)=P_D(\omega)P_D(\sigma)$ for $\omega\in E_o'^*$ and $\sigma\in E_o'$ (we take $P_D(\varepsilon)=\varepsilon$).

From a practical viewpoint, there may exist some constraints in the use of the defensive function (e.g., the intruder may have direct access to a sensor that generates a certain label, therefore that label can never be replaced or inserted).
To capture such constraints, we consider the following scenario in the remainder of the paper:
each event $t$ in $E_o$ that is generated by the system can be deleted or replaced with \emph{some} of the other events in $E_o$ but not necessarily all events ($E_R\subseteq E_o\dot{\cup}\lbrace\varepsilon\rbrace$); furthermore, only some events in $E_o$ (but not necessarily all events) can be inserted before the output of event $t$ ($E_I\subseteq E_o$).

\subsection{$C$-Enforceability}

\begin{definition}\label{enforceable}
	Given a system $G$ and its generated language $L(G)$, a defensive function $D$ is said to be $C$-enforcing w.r.t. $P$ and $E_{S}$ if the following conditions hold:
	\begin{enumerate}
		\item for all $\lambda\in L(G)$, $D(P(\lambda))$ is defined;
		\item for each $s$-revealing sequence $\lambda_s=\lambda_1s\lambda_2\in L(G)$,  where $s\in E_S$, $\lambda_1, \lambda_2\in E^*$, there exists at least one sequence $u=t_1t_2\dots t_n\in L(G)$ with $t_i \notin E_S$ for $i\in\lbrace1,2,\dots,n\rbrace$, such that $P(u)\in P_D(D(P(\lambda_s)))$; furthermore, for each continuation $\lambda'\in E^*$ such that $\lambda_s\lambda'\in L(G)$, we can find $u'=t_1't_2'\dots t'_{n'}\in L(G)$ with $t_i'\notin E_S$, such that $P(u')\in P_D(D(P(\lambda_s\lambda')))$ and $P(u)\in \overline{P(u')}$.
	\end{enumerate}
\end{definition}

The first condition requires that,
as the interface of the system, the defensive function should be able to react to every observable sequence of events that can be  generated by the system, such that defensive actions (deletions, insertions, or replacements) can be utilized.
In other words, in order to retain the privacy of a system, for each $s$-revealing sequence $\lambda_s$, the corresponding defensive actions should include a feasible one that does not disclose the occurrences of the secret events as described in the second condition.
Furthermore, for the sake of maintaining accuracy and completeness of defensive actions, the last part of the second condition guarantees that, for each subsequent activity, defensive actions can retain the concealment of the occurrences of the secret events indefinitely.
If both conditions are satisfied, the concealability of the system can be enforced by a $C$-enforcing defensive function.

\begin{definition}
	Given a system $G$ and its generated language $L(G)$, $G$ is said to be $C$-enforceable w.r.t. $P$ and $E_{S}$ if there exists a $C$-enforcing defensive function $D$.
\end{definition}

%

\subsection{Problem Statement}
Consider a system modeled as an NFA $G=(X,E,f,x_0)$, where $E=E_o\dot{\cup} E_{uo}$ with $E_o$ and $E_{uo}$ being the sets of observable and unobservable events, respectively, and let $E_S\subseteq E_{uo}$ be the set of secret events. 
We consider the case that the system is unconcealable. 
Then, under some system behavior, the eavesdropper can infer the occurrences of the secret events by utilizing its knowledge of the system model and by analyzing the observable behavior of the system.
The defensive function proposed in this section can alter observable output events of the system $G$ by deletions, insertions, or replacements. 
The problem of enforcing concealability of the system aims to determine whether the defensive function is $C$-enforcing, i.e., given constraints in terms of how the defensive function can delete, insert or replace events, determine whether it is possible to conceal the occurrences of the secret events.
This problem can be formalized as follows.

\begin{problem}
	Given an NFA $G=(X,E,f,x_0)$ with $E=E_o\dot{\cup} E_{uo}$, a set of secret events $E_{S}\subseteq E_{uo}$, and a defensive function $D$, determine whether the defensive function $D$ is $C$-enforcing.
\end{problem}

\section{Verification of $C$-enforceability}

In this section, a verifier and a defensive verifier are constructed to respectively capture system behavior and all feasible defensive actions following system activity. 
Then, an $E$-verifier is built by a special synchronization mechanism between a verifier and a defensive verifier to verify $C$-enforceability for a defensive function that has to operate under deletion, insertion, and replacement constraints.
Specifically, the $E$-verifier can be used to obtain, with polynomial complexity, one necessary and one sufficient condition for $C$-enforceability; in case that the sufficient condition is satisfied, the trimmed version of the $E$-verifier leads to a strategy to enforce concealability, also with polynomial complexity.
These developments should be contrasted against constructions with exponential complexity \cite{2014Wu} (the latter, however, provide a necessary and sufficient condition).

\subsection{Verifier and Defensive Verifier}

In order to detect/identify secret events with reduced complexity, the verifier is constructed next by following an approach similar to the one proposed in \cite{2001Kumar}, described via the two-step procedure below.
\begin{enumerate}
	\item First, we construct an NFA $G_o=(X_o,E_o,f_o,x_0^o)$ from $G$ with $L(G_o)=P(L(G))$, where $X_o=\lbrace (x_i,l_i)|x_i\in X, l_i\in \lbrace N,S\rbrace\rbrace$ is the finite set of states, $E_o$ is the set of observable events, $x_0^o=(x_0,N)$ is the initial state, and $f_o: X_o\times E_o\rightarrow 2^{X_o}$ is the transition function for $x\in X_o$, $e\in E_o$, defined as follows.
	
	For $l=N$, 
	\begin{equation*}
	f_o((x,l),e)=\left\{
	\begin{aligned}
	&\lbrace(x',N)\mid\exists t\in (E_{uo}-E_S)^*,\\
	&\qquad\qquad P(t)=e,x'\in f(x,t)\rbrace,  \\ 
	&\lbrace(x',S)\mid\exists u=u'su''\in E_{uo}^*,\\
	&\qquad\qquad P(u)=e, x'\in f(x,u)\rbrace, \\ 
	\end{aligned}
	\right.
	\end{equation*}
	For $l=S$,
	\begin{equation*}
	f_o((x,l),e)=\left\{
	\begin{aligned}
	&\lbrace(x',S)|\exists v\in E_{uo}^*,\\
	&\qquad\qquad P(v)=e, x'\in f(x,v)\rbrace.   \\ 
	\end{aligned}
	\right.
	\end{equation*}
	\item Then, a verifier, captured by $G_{v}$, can be obtained by composing $G_o$ with itself. It is defined as $G_{v}=(X_{v},E_o,f_{v}, x_{v_0})$, where $X_{v}=X_o\times X_o$ is the set of states, $E_o$ is the set of observable events, $f_{v}: X_v\times E_o\rightarrow 2^{X_v}$ is the transition function defined by $f_v((x_i,l_i),(x_j,l_j),e)=f_o((x_i,l_i),e)\times f_o((x_j,l_j),e)$ for all $e\in E_o$ if and only if both $f_o((x_i,l_i),e)$ and $f_o((x_j,l_j),e)$ are defined, and $x_{v_0}=x_0^o\times x_0^o=\lbrace((x_0,N),(x_0,N))\rbrace\in X_v$ is the initial state. 
\end{enumerate}

Each state $x_v\in X_v$ can be seen as a pair $\lbrace(x_i,l_i),(x_j,l_j)\rbrace$, where $x_i,x_j\in X$ and $l_i,l_j\in\lbrace N,S\rbrace$. 
We can partition $X_v$ into three disjoint sets in a similar manner as done for the diagnoser.
State $x_v$ is called normal if $l_i=l_j=N$; it is called secret if $l_i=l_j=S$; and it is called uncertain if $l_i\neq l_j$. 
Thus, the state space $X_v$ can be partitioned into three disjoint sets $X_v=X_{N_v}\dot{\cup} X_{S_v}\dot{\cup} X_{NS_v}$, where $X_{N_v}$ is the set of normal states, $X_{S_v}$ is that of secret states, and $X_{NS_v}$ is that of uncertain states.


For the given system $G$, we denote the set of possible defensive actions outputted via the defensive function under deletion, insertion, and replacement constraints by $E_D$, which is defined as $E_D=\bigcup_{t\in E_o}D(t)$.
A defensive verifier ${G_D}$ is defined below to capture all feasible defensive actions.
%

\begin{definition}\label{G_{ad}}
	Given an NFA $G=(X,E,f,x_0)$ and its verifier $G_v=(X_v,E_o,f_v,x_{v_0})$, a defensive verifier is an NFA $G_D=(X_D,E_D,f_D,x_{D_0})$, where $X_D=X_v-X_{S_v}$ is the state space, $x_{D_0}=x_{v_0}$ is the initial state, $E_D$ is the set of defensive actions with constraints, and $f_D$ is the transition function that implements defensive actions for $t/o\in E_D$, and $x_D\in X_D$: 
	\begin{equation*}
	f_D(x_D,t/o)=\left\{
	\begin{array}{rll}
	&f_v(x_D,\varepsilon) \\
	& \quad\text{if}\quad P_D(t/o)=\varepsilon \quad\text{(deletion)},\\ 
	&f_v(x_D,t')\cap X_D \\
	& \quad \text{if}\quad P_D(t/o)=t' \quad\text{(replacement)},\\
	&(\bigcup_{x_D'\in f_v(x_D,t')}f_v(x_D',t))\cap X_D \\
	& \quad \text{if}\quad P_D(t/o)=t't \quad\text{(insertion)}.\\  
	\end{array}
	\right.
	\end{equation*}
\end{definition}

Informally, the defensive verifier $G_D$ can be constructed by cloning $G_v$ and pruning away all secret states and all original events in $E_o$ while adding defensive events, and then taking the accessible part of the resulting automaton.

\begin{figure}[htbp]
	\centering  
	\subfigure[Verifier $G_v$.]{
		\centering
		\includegraphics[width=0.45\textwidth]{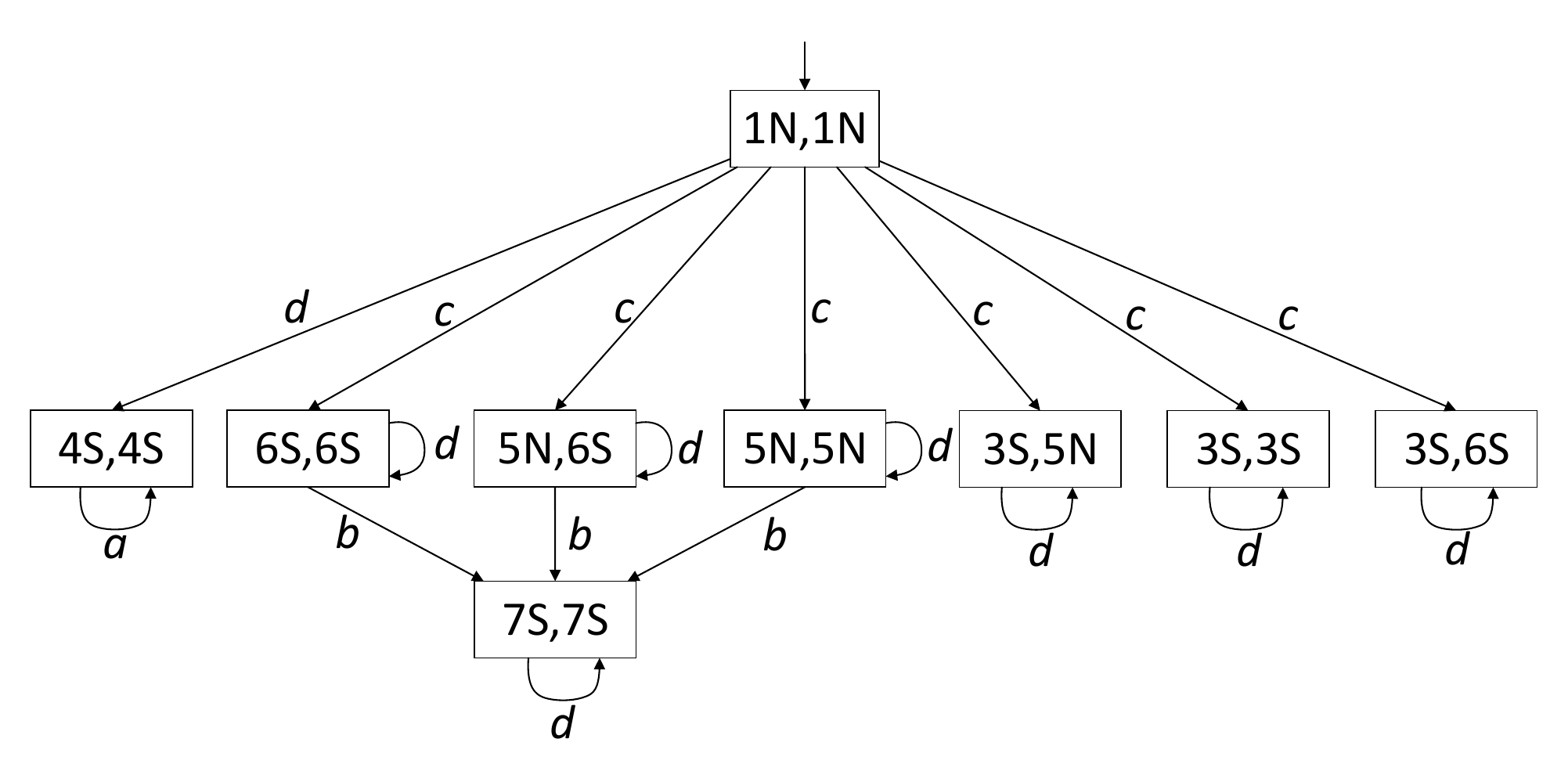}
		\label{Gv}
	}

	\subfigure[Defensive verifier $G_D$.]{
		\centering
		\includegraphics[width=0.25\textwidth]{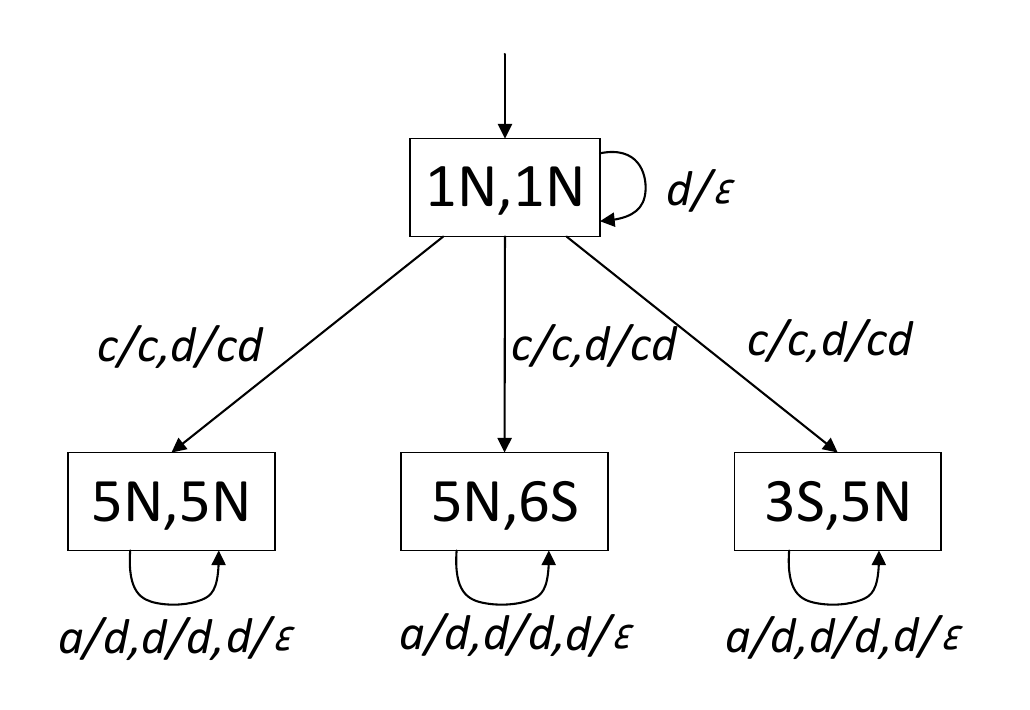}
		\label{GD}
	}
	\caption{Verifier and defensive verifier constructions.}
	\label{GvGD}
\end{figure}

%

The verifier and the defensive verifier of the system in Example 1 are shown in Fig. \ref{GvGD} (note that, following common practice, only one of states $(x_i,x_j)$ and $(x_j,x_i)$ is retained in the verifier for brevity, i.e., states $(3S,5N), (3S,6S)$, and $(5N,6S)$).
For the set of observable events $E_o=\lbrace a,b,c,d\rbrace$, we impose constraints on corresponding defensive actions as follows: $D(a)\in\lbrace a/a,a/d\rbrace$, $D(b)\in\lbrace b/b\rbrace$, $D(c)\in\lbrace c/c\rbrace$, and $D(d)\in\lbrace d/d,d/cd,d/\varepsilon\rbrace$.

Since the set of states of $G_D$ is identical to (or a subset of) the set of states of $G_v$, we can also classify the states of $G_D$ in two disjoint sets: the set of normal states $X_{N_v}$, and the set of uncertain states $X_{NS_v}$ (recall that the set of secret states has been pruned away). 

\subsection{Verification of $C$-Enforceability under Defensive Functions with Constraints}
A $C$-enforcing defensive function should ensure that all possible defensive actions keep the eavesdropper confused regardless of system activity.
Thus, when the defensive function is subject to constraints, we propose a new construction by composing a verifier and a defensive verifier of a given system to capture all feasible defensive actions following system activity.

\begin{definition}\label{RdefinitionKverifier}
	Given an NFA $G=(X,E,f,x_0)$ with its verifier $G_v=(X_v,E_o,f_v,x_{v_0})$ and its defensive verifier $G_{D}=(X_D,E_D,f_D,x_{D_0})$, the $E$-verifier is an NFA that captures the composition of $G_v$ and $G_D$, denoted by $V_E=(X_E, E_D, f_E, x_{E_0})$, satisfying the following:
	\begin{enumerate} 
		\item 	$X_E=\lbrace (x_v,x_D)\mid  x_v\in X_v, x_D\in X_D\rbrace$ is the state space;
		\item  $x_{E_0}= (x_{v_0},x_{D_0})$ is the initial state;
		\item $E_D$ is the set of defensive actions;
		\item $f_E$ is the transition function defined as follows: 
		$$f_E((x_v,x_D),t/o)=f_v(x_v,t)\times f_D(x_D,t/o)$$  $\noindent$ where $t\in E_o$, $t/o\in D(t)$.
	\end{enumerate}	
\end{definition}

The main idea underlying Definition $\ref{RdefinitionKverifier}$ is to compute a parallel composition of $G_v$ and $G_D$, where the event synchronization is performed w.r.t. $E_o$ from $G_v$ and $E_D$ from~$G_D$.

\begin{definition}
	Given an $E$-verifier $V_E=($$X_E$, $E_D,$ $f_E,$ $x_{E_0})$, a defensive action $t/o\in E_D$ is said to be feasible at state $(x_v,x_D)\in X_E$ if $f_E((x_v,x_D),t/o)\ne\emptyset$.
\end{definition}

\begin{lemma}
	Given an NFA $G=(X,E,f,x_0)$ with its~verifier $G_v=(X_v,E_o,f_v,x_{v_0})$ and defensive verifier $G_{D}=(X_D, E_D,f_D,x_{D_0})$, the $E$-verifier $V_E$ enumerates feasible defensive actions following system activity.
\end{lemma}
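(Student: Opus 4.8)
The plan is to first make the informal phrase ``enumerates feasible defensive actions following system activity'' precise: I would show that the reachable runs of $V_E$ are in exact correspondence with pairs consisting of (i) an observable sequence that the system can genuinely produce and (ii) a sequence of defensive actions responding to it whose projected output never enters a secret state. The starting point is to unwind Definition 12: a transition labeled $t/o$ out of a reachable state $(x_v,x_D)$ exists only when $t\in E_o(x_v)$ and $t/o\in D(t)$, and by Definition 13 the action is feasible exactly when $f_E((x_v,x_D),t/o)=f_v(x_v,t)\times f_D(x_D,t/o)\neq\emptyset$, i.e. when both factors are nonempty. Thus the first coordinate records that the system can really emit observation $t$ at $x_v$, while the second coordinate records that the defended output $P_D(t/o)$ is admissible; since $X_D=X_v-X_{S_v}$ and $f_D(x_D,t/o)=f_v(x_D,P_D(t/o))\cap X_D$, nonemptiness of the second factor means precisely that the defended output drives $G_v$ into a non-secret (safe) state.

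The core of the argument is an induction on the length of the system activity $u=t_1t_2\cdots t_k\in P(L(G))$. I would maintain the invariant that after reading a compatible sequence of actions $(t_1/o_1)\cdots(t_k/o_k)$, every state $(x_v,x_D)$ reachable in $V_E$ satisfies $x_v\in f_v(x_{v_0},t_1\cdots t_k)$ and $x_D\in f_v(x_{v_0},P_D((t_1/o_1)\cdots(t_k/o_k)))\cap X_D$. The base case is immediate from $x_{E_0}=(x_{v_0},x_{D_0})$. For the inductive step I would apply the product form of $f_E$: extending by a feasible action $t_{k+1}/o_{k+1}$ advances the first coordinate by the true observation $t_{k+1}$ (so it continues to track genuine system behavior in $G_v$) and advances the second coordinate by $P_D(t_{k+1}/o_{k+1})$ inside $G_D$ (so it continues to track the defended, secret-avoiding output). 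Conversely, any pair consisting of an extendible genuine observation $t_{k+1}$ and an admissible defended continuation $t_{k+1}/o_{k+1}$ is witnessed by an edge of $V_E$ carrying that action, which establishes that the outgoing edges at each reachable state enumerate precisely the feasible defensive actions available at that point of the system's activity.

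The delicate point, which I would treat carefully, is the length mismatch between the two coordinates created by insertions and deletions. In the insertion case $P_D(t/o)=t't$, so the defensive verifier consumes two observable symbols of $G_v$ while the first coordinate consumes only the single true event $t$; in the deletion case $P_D(t/o)=\varepsilon$, so the second coordinate is left unchanged (since $f_v(x_D,\varepsilon)=\lbrace x_D\rbrace$) while the first still advances by the true event $t$. I would verify that the definition of $f_D$ already absorbs this bookkeeping, so that the synchronization in $f_E$ keeps the ``real'' and ``perceived'' sequences mutually consistent even though their lengths differ, and that feasibility is preserved exactly because $f_D$ intersects with $X_D$, guaranteeing that no enumerated action ever drives the perceived behavior into a secret state. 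Concluding, the reachable part of $V_E$ lists, for each prefix of the system's observable activity, exactly the feasible defensive actions, which is the claim of the lemma. The main obstacle I anticipate is purely in stating the correspondence cleanly in the presence of this coordinate-length asymmetry; once the invariant above is phrased in terms of $P_D$ rather than raw action length, the induction goes through routinely.
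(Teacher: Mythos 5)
Your proposal is correct and follows the same route the paper takes: the paper simply omits the proof, stating that the lemma follows directly from Definition~\ref{RdefinitionKverifier} since $V_E$ is a parallel composition synchronizing $E_o$ in $G_v$ with $E_D$ in $G_D$, which is exactly the correspondence you unwind. Your inductive invariant and the discussion of the insertion/deletion length asymmetry are a faithful (and more careful) elaboration of that one-line argument rather than a different approach.
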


The proof of the above lemma follows directly from Definition $\ref{RdefinitionKverifier}$ and is thus omitted.

Note, for state $(x_v,x_{D})$ in the $E$-verifier $V_E$, the first component represents a pair of possible actual states\footnote{Recall that the verifier $G_v$ is able to capture observable behavior of the system by tracking pairs of states in $G_o$ (that the system might reach).} that the system can be in at current time, whereas the second one represents a pair of possible fake states perceived by the eavesdropper after defensive actions. 
Since $G_v$ and $V_E$ are nondeterministic, multiple states in $G_v$ and $V_E$ might be reachable at current time following a sequence of observations.
We denote the set of co-relative states in the verifier $G_v$ for state $x_o\in X_o$ by $X_{v_{x_o}}=\lbrace (x_o,x_o')\in X_v|x_o'\in X_o\rbrace\cup\lbrace (x_o',x_o)\in X_v|x_o'\in X_o\rbrace $.
That is, the co-relative states of $x_o$ form the set $X_{v_{x_o}}$ composed of all states of the verifier having $x_o$ as one of its components.
We denote the set of co-relative states\footnote{The set $X_{v_{x_o}}$ ($X_{E_{x_o}}$) contains all possible actual states (pairs of actual and fake states) in the verifier ($E$-verifier) that might be simultaneously reachable following a sequence of observations that leads $G_o$ to state $x_o$.} in the $E$-verifier $V_E$ for state $x_o\in X_o$ by $X_{E_{x_o}}=\lbrace (x_v,x_D)\in X_E|x_v\in X_{v_{x_o}}\rbrace$.

\begin{definition}
	Consider an NFA $G=(X,E,f,x_0)$ with $G_o=(X_o, E_o,f_o,x_0^o)$ and $E$-verifier $V_E=($$X_E$, $E_D,$ $f_E,$ $x_{E_0})$.
		The set of states $X_{E_{x_o}}$ for state $x_o\in X_o$ is said to be legal if for all $t\in E_o$ such that $f_o(x_o,t)\neq\emptyset$ there exists $(x_v,x_D)\in X_{E_{x_o}}$ and defensive action $t/o\in E_D$ such that $f_E((x_v,x_D), t/o)$ is defined; $X_{E_{x_o}}$ is said to be illegal otherwise.
\end{definition}

For the defensive function to react to system activity, we need that for each (reachable) system state $x_o\in X_o$ and for each $t\in E_o$ that is feasible at $x_o$ ($f_o(x_o,t)\neq\emptyset$), we can find among all states in $X_{E_{x_o}}$ at least one defensive action $t/o$ that is allowed. 
If we cannot, the co-relative state set $X_{E_{x_o}}$ is illegal.
In other words, at state $x_o$ in $G_o$ (assumed reachable without loss of generality), we will have no way of maintaining secrecy following observable event $t$ that can be generated by the~system.

\begin{theorem}
		Consider an NFA $G=(X,E,f,x_0)$ with $G_o=(X_o,E_o,f_o,x_0^o)$, a defensive function $D$, and an $E$-verifier $V_E=(X_{E}, E_D, f_E, x_{E_0})$. If the defensive function $D$ is $C$-enforcing, then the set of co-relative states $X_{E_{x_o}}$ is legal for all (reachable) $x_o\in X_o$.
\end{theorem}

\textit{Proof:}
	By contrapositive, suppose that there exists a state $x_o\in X_o$ such that the set of its co-relative states $X_{E_{x_o}}$ is illegal. This means that 
	for a system action $t\in E_o$ that is defined at state $x_o$, all corresponding states $(x_v,x_D)\in X_{E_{x_o}}$ satisfy $f_E((x_v,x_D), t/o)=\emptyset$ for all $t/o$.
	In this case, there is no defensive action that can be taken in response to system action $t$, which violates the first condition of Definition 5, i.e., the defensive function $D$ is not $C$-enforcing.
	~\hfill
$\diamond$

\begin{algorithm}[htbp]
	\caption{Necessary condition for $C$-enforceability of a defensive function}
	\label{AlgorithmE}
	\LinesNumbered
	\KwIn{$G_o=(X_o,E_o,f_o,x_0^o)$ and $E$-verifier $V_E=(X_{E}, E_D, f_E,  x_{E_0})$, where $X_{E}=\lbrace (x_v,x_D)\mid  x_v\in X_v, x_D\in X_D\rbrace$.}
	\For{$x_o\in X_o$}{
		Let $X_{E_{x_o}}$ be the set of co-relative states for state $x_o$, i.e., $X_{E_{x_o}}=\lbrace (x_v,x_D)\in X_E|x_v\in X_{v_{x_o}}\rbrace$\;
		\For{$t\in E_o$ such that $f_o(x_o,t)\neq\emptyset$}{
			\If{$\forall$$(x_v,x_D)\in X_{E_{x_o}}$,$\forall t/o\in D(t)$, $f_E((x_v,x_D),t/o)$ \textcolor{blue}{$=$} $\emptyset$}{
				\Return Not $C$-enforcing.
			}
		}
	}
	\Return Possibly $C$-enforcing.
\end{algorithm}

Given an $E$-verifier, we can check the necessary condition for the defensive function to be $C$-enforcing by following Algorithm~1. 
However, it is possible that a defensive function may not be $C$-enforcing even though the necessary condition is satisfied.
To this end, a sufficient condition for $C$-enforceability is provided next.


Note that the $E$-verifier may generate some problematic states, i.e., for state $(x_v,x_D)\in X_E$, there may not exist a subsequent state that can be reached for some $t\in E_o$ that is defined at state $x_v$.
In other words, there is no defensive action that can be taken in response to system execution $t$ for some $t\in E_o$.
Moreover, the removal of such states may\footnote{For instance, consider a state $(x_v,x_D)\in X_E$ and an event $t\in E_o$ for which there are two defensive actions $t/o_1$ and $t/o_2$ such that $f_E((x_v,x_D),t/o_1)=\lbrace (x_{v_1},x_{D_1})\rbrace$ and $f_E((x_v,x_D),t/o_2)=\lbrace (x_{v_2},x_{D_2})\rbrace$.
If state $(x_{v_1},x_{D_1})$ is problematic, its removal will not make state $(x_v,x_D)$ problematic.} generate new problematic states.
Note that when the same defensive action evolves from some originating $E$-verifier state to multiple destination states (on account of the nondeterminacy of the $E$-verifier), if one of the multiple destination states is problematic, then only that state needs to be avoided.\footnote{Assume that for a given state $(x_v,x_D)\in X_E$ and an event $t\in E_o$, there is a defensive action $t/o\in D(t)$ such that $f_E((x_v,x_D),t/o)=\lbrace (x_{v_1},x_{D_1}), (x_{v_2},x_{D_2})\rbrace$, where state $(x_{v_1},x_{D_1})$ is problematic.
	In this case, the removal of state $(x_{v_1},x_{D_1})$ will not make state $(x_v,x_D)$ problematic: since state $(x_{v_2},x_{D_2})$ is retained, it can be reached from state $(x_v,x_D)$ via the defensive action $t/o$.}
In order to ensure that the $E$-verifier captures only feasible defensive actions following system activity, a formal procedure for constructing the reduced $E$-verifier, denoted by $RV_E$, by removing problematic states in an iterative way, is presented in Algorithm 2 below.

\begin{algorithm}[htbp]
	\caption{Construction of reduced $E$-verifier}
	\label{AlgorithmVE}
	\LinesNumbered
	\KwIn{$E$-verifier $V_E=(X_{E}, E_D, f_E,  x_{E_0})$, where $X_{E}=\lbrace (x_v,x_D)\mid  x_v\in X_v, x_D\in X_D\rbrace$.}
	\KwOut{Reduced $E$-verifier $RV_E=(X_{ER}, E_D, f_E,  x_{E_0})$.}
	$X_{P}=\emptyset$, where $X_{P}$ denotes the set of problematic states\;
	\For{$(x_v,x_D)\in X_E-X_P$}{
		\For{$t\in E_o$ such that $f_v(x_v,t)\neq\emptyset$}{
			\If{$\forall t/o\in D(t)$, $f_E((x_v,x_D),t/o)\cap (X_E-X_P)=\emptyset$}{
				$X_P:=X_P\dot{\cup} \lbrace(x_v,x_D)\rbrace$\;
			}
		}
	}	
	Go back to Step 2 until no more new problematic states are produced\;
	$X_{ER}=X_E-X_P$\;
	\Return $RV_E$.
\end{algorithm}


\begin{theorem}
		Consider an NFA $G=(X,E,f,x_0)$, a defensive function $D$, and a reduced $E$-verifier $RV_E=(X_{ER}, E_D, f_E, x_{E_0})$. If there exists $(x_v,x_D)\in X_{ER}$ for all $x_v\in X_v$, then the defensive function $D$ is $C$-enforcing.
\end{theorem}

\textit{Proof:}
	State $(x_v,x_D)\in X_{ER}$ contains two components, where the first satisfies $x_v\in X_v$ and the second satisfies $x_D\in X_D$ (i.e., $x_v$ and $x_D$ are states of the verifier $G_v$ and the defensive verifier $G_D$, respectively).
	If there exists $(x_v,x_D)\in X_{ER}$ for all $x_v\in X_v$, this means that each observable event generated by the verifier has at least one defensive action; hence the first condition of Definition~\ref{enforceable} holds.
	Moreover, each subsequently visited state in $x_{ER}$ has the same property (otherwise, it would have been pruned away).
	The case that $x_v\in X_{S_v}$ while $x_D\in X_{S_v}$ for each state $(x_v,x_D)\in X_{ER}$ is not possible since the secret states were pruned away in $G_D$ (in fact, $x_D\notin X_{S_v}$ regardless of $x_v$), i.e., the second condition of Definition \ref{enforceable} holds.
	Therefore, one concludes that the defensive function $D$ is $C$-enforcing.
\hfill
$\diamond$

Note that Theorem 2 provides a sufficient condition for $C$-enforceability. 
If the condition holds, the reduced $E$-verifier can be used to take defensive actions in response to system activity.
In fact, any action that is feasible at the present state —or one of the present states— of the reduced $E$-verifier can be taken since there will always be a feasible action in all of the future states, regardless of the event that occurs.
This also gives us a simple way to obtain a strategy for the defensive function so as to enforce concealability of the system; this strategy is captured by the reduced $E$-verifier in a non-deterministic manner.

\begin{figure}[htbp]
	\centering
	{
		\includegraphics[width=0.5\textwidth]{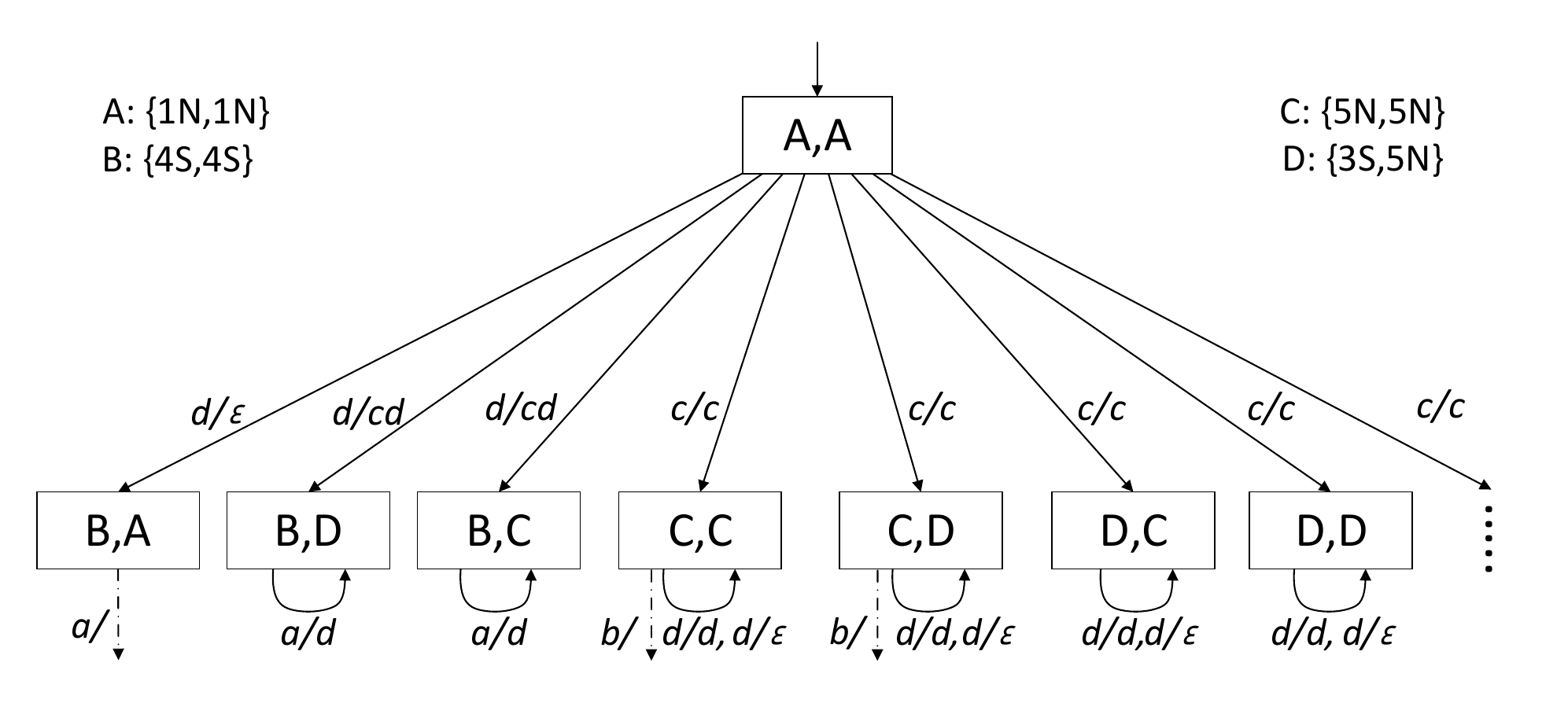}
	}
	\caption{$E$-verifier $V_E$.}
	\label{Everifier}
\end{figure}
\begin{eg}
	Consider the NFA $G$ in Fig. \ref{G} with verifier $G_v$ and defensive verifier ${G_D}$ as shown in Fig. \ref{GvGD}. 
	The $E$-verifier of $G$ is constructed following Definition \ref{RdefinitionKverifier} (partially depicted in Fig. \ref{Everifier} due to space limitation).
	For convenience, partial states have been renamed as $A,B,C$ and $D$, as shown in the figure.
	Note that we draw dotted arrows corresponding to unfeasible defensive actions in response to events $a$ and $b$, denoted by $a/$ and $b/$, at states $(B,A)$, $(C,C)$, and $(C,D)$.
	
	First, we check whether the necessary condition in Theorem~1 is violated via Algorithm~1.
	For instance, the set of co-relative states\footnote{We only consider part of the $E$-verifier in Fig. \ref{Everifier} for the sake of illustration.} for state $4S$ in $G_o$ is $X_{E_{4S}}=\lbrace (B,A), (B,D), (B,C)\rbrace$. 
	We only need to check if a defensive action for $a\in E_o$ is possible among $(B,A), (B,D)$, and $(B,C)$ (because $a$ is the only feasible event at state $4S$).
	Since there are subsequent defensive actions from $(B,D)$ and $(B,C)$, the set of co-relative states for state $4S$ is legal. 
	Similarly, we can check the set of co-relative states for state $5N$ in $G_o$, i.e., $X_{E_{5N}}=\lbrace (C,C), (C,D), (D,C), (D,D)\rbrace$.
	It turns out that there exists no defensive action for $b$.
	Thus, the set of co-relative states for state $5N$ is illegal and Algorithm~1 stops.
	We know that the defensive function is not $C$-enforcing.
	
	One can also check whether the sufficient condition in Theorem 2 is violated via the reduced $E$-verifier $RV_E$ constructed by following Algorithm 2 (in this case, we already know that the sufficient condition will be violated since the necessary condition is not satisfied).
	Since state $(B,A)$ has no continuation (there is no defensive action for event $a$), it is marked as problematic.
	For states $(C,C)$ and $(C,D)$, they are marked as problematic since there is no defensive action for event $b$.
	For state $(A,A)$, even though $f_E((A,A),d/\varepsilon)$ leads to a problematic state $(B,A)$, there exists a feasible defensive action $d/cd$ that leads to states $(B,C)$ and $(B,D)$; even though $f_E((A,A),c/c)$ leads to problematic states $(C,C)$ and $(C,D)$, there exists a feasible defensive action $c/c$ that leads to states $(D,C)$, $(D,D)$, $(E,C)$, and $(E,D)$.
	Therefore, state $(A,A)$ is not marked as problematic.	
	This part of the reduced $E$-verifier $RV_E$ can be directly obtained by removing states $(B,A)$, $(C,C)$, and $(C,D)$.
	Thus, one cannot conclude that the defensive function is $C$-enforcing since the sufficient condition in Theorem 2 is violated (there is no state composed with $C$ and $F$ in $RV_E$).
	The important aspect of the above discussions is that both conditions (sufficient and necessary) can be checked with polynomial complexity.
	\hfill$\diamond$
\end{eg}

\begin{remark}
	Consider a system $G=(X,E,f,x_0)$, where the number of states is $|X|$ and that of secret events is $|E_S|=1$.
	Recall that the state space of the verifier $G_v$ has $|X_v|=(|X|\times 2^{|E_S|})^2=4 |X|^2$ states; identical complexity is exhibited by the defensive verifier, i.e., $|X_D|=4 |X|^2$.
	Then, the number of states of $RV_E$ is bounded by $|X_{ER}|=|X_v|\times|X_D|=16 |X|^4$.
	Overall, the structural complexity of $RV_E$ is $O(|X|^4)$.
\end{remark}
\begin{remark}
	Recall that a defensive function is said to be $C$-enforcing if, regardless of what event is generated by a given system, it can manipulate observations and output a sequence that does not reveal the occurrences of secret events.
	When the defensive function is unconstrained (i.e., each event $t$ in $E_o$ that is produced by the system can be replaced with any other event in $E_o$ ($E_R=E_o\dot{\cup}\lbrace\varepsilon\rbrace$), or any event in $E_o$ can be inserted before the output of event $t$ ($E_I=E_o$)),
	one can also use a verifier construction to find an
	observable sequence including no secret event so as to verify $C$-enforceability in polynomial time.
	Note that the existence of such a sequence can be found via the breadth-first search \cite{DFS} on the verifier with complexity $O(|X_v|\times|E_o|)$.
\end{remark}

\section{Conclusions}

This paper deals with the problem of event concealment for concealing secret events in a system modeled as an NFA under partial observation.
We formally define concealable and unconcealable secret events.
A necessary and sufficient condition is given to verify whether the system is concealable or not.
If concealability of the system does not hold, then we deal with the problem of concealability enforcement.
The notion of $C$-enforceability characterizes whether an external defensive function has the capability to use an obfuscation strategy that manipulates the outputs generated by the system such that the occurrences of unconcealable events cannot be revealed.
It is worth mentioning that the focus of this paper is on the use of reduced complexity constructions (with polynomial complexity) to provide one necessary condition and one sufficient condition for $C$-enforceability. 

\bibliographystyle{IEEEtran}
\bibliography{concealability}

\end{document}